\title{Simplifying Activity-on-Edge Graphs} 
\titlerunning{Simplifying Activity-on-Edge Graphs} 
\author{David Eppstein}{University of California, Irvine, United States }{eppstein@uci.edu}{}{}
\author{Daniel Frishberg}{University of California, Irvine, United States }{dfrishbe@uci.edu}{https://orcid.org/0000-0002-1861-5439}{}
\author{Elham Havvaei}{University of California, Irvine, United States }{ehavvaei@uci.edu}{https://orcid.org/0000-0003-0069-2863}{}
\authorrunning{D. Eppstein et al.} 
\keywords{directed acyclic graph,  activity-on-edge graph, critical path, project planning, milestone minimization, graph visualization} 
\newcommand{\stv}{\operatorname{St}}
\newcommand{\en}{\operatorname{End}}
\newcommand{\opt}{\textit{Opt}\xspace}
\newcommand{\alg}{\textit{$\mathcal{A}$}\xspace}
\newcommand{\rsag}[1]{\rightsquigarrow_{#1}}
\begin{document} 
\maketitle           
\begin{abstract}
We formalize the simplification of \emph{activity-on-edge} graphs used for visualizing project schedules,
where the vertices of the graphs represent project milestones, and the edges represent either
tasks of the project or timing constraints between milestones.
In this framework, a timeline of the project can be constructed as a leveled drawing of the graph,
where the levels of the vertices represent the time at which each milestone is scheduled to happen.
We focus on the following problem: given an activity-on-edge graph representing a project,
find an equivalent activity-on-edge graph\textemdash one with the same critical paths\textemdash
that has the minimum possible number of milestone vertices among all equivalent activity-on-edge graphs.
We provide a polynomial-time algorithm for solving this graph minimization problem.
\end{abstract}

\section{Introduction}
\label{sec:intro}
    
\begin{figure}[t]
 	\center
    \includegraphics[scale=0.8]{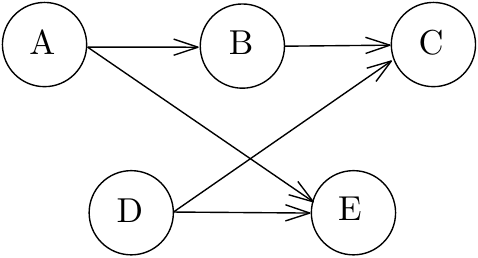} 
\smallskip
 
     \includegraphics[scale=0.6]{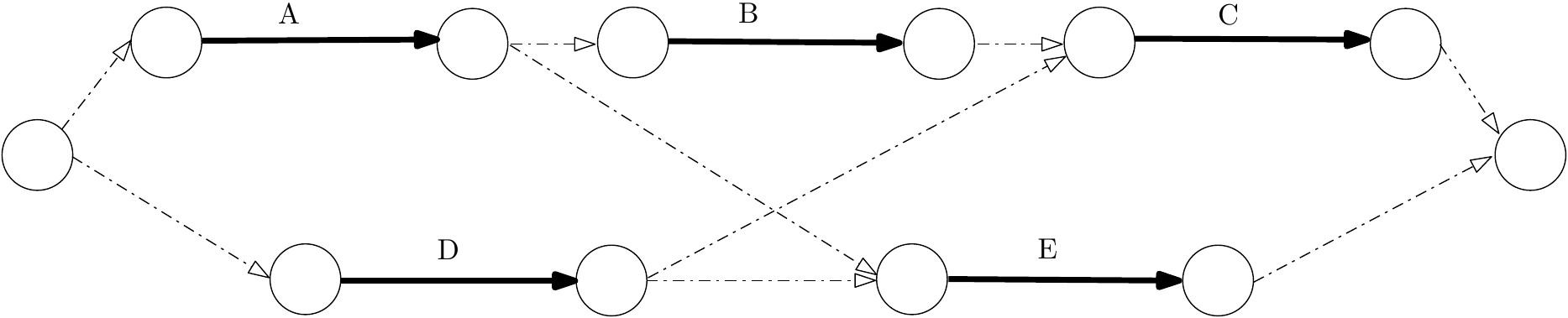}
  
  \caption{An activity-on-node graph, above, and its naively expanded activity-on-edge graph, below, with solid arrows as task edges and dashed-dotted arrows as unlabeled edges. }
  \label{fig:aonaoe}
\end{figure}

The \emph{critical path method} is used in project modeling to describe the tasks of a project, along with the dependencies among the tasks; it was originally developed as PERT by the United States Navy in the 1950s~\cite{malcolm1959application}.  A dependency graph is used to identify bottlenecks, and in particular to find the longest path among a sequence of tasks, where each task has a required length of time to complete (this is known as the \emph{critical path}).

In this paper we are interested in the problem of visualizing an abstract timeline of the possible critical paths of a given project, represented abstractly as a partially ordered set of tasks, at a point in the project planning at which we do not yet know the time lengths of each task.
The most common method of visualizing partially ordered sets, as an \emph{activity-on-node graph} (a transitively reduced directed acyclic graph with a vertex for each task) is unsuitable for this aim, because it represents each task as a point instead of an object that can extend over a span of time in a timeline. To resolve this issue, we choose to represent each task as an edge in a directed acyclic graph. In this framework, the endpoints of the task edges have a natural interpretation, as the \emph{milestones} of the project to be scheduled. Additional \emph{unlabeled edges} do not represent tasks to be performed within the project, but constrain certain pairs of milestones to occur in a certain chronological order. The resulting \emph{activity-on-edge graph} can then be drawn in standard upward graph drawing style~\cite{bt1992area,battista1998graph,gt2001computational,gt1995upward,ahr2010improving}.
Alternatively, once the lengths of the tasks are known and the project has been scheduled, this graph can be drawn in leveled style~\cite{junger1999level,hkl2004characterization}, where the level of each milestone vertex represents the time at which it is scheduled.

It is straightforward to expand an activity-on-node graph into an activity-on-edge graph
by expanding each task vertex of the activity-on-node graph into a pair of milestone vertices connected by a task edge, with the starting milestone of each task retaining all of the incoming unlabeled edges of the activity-on-node graph and the ending milestone retaining all of the outgoing edges. It is convenient to add two more milestones at the start and end of the project, connected respectively to all milestones with no incoming edges and from all milestones with no outgoing edges. The size of the resulting activity-on-edge graph is linear in the size of the activity-on-node graph. An example of such a transformation is depicted in \autoref{fig:aonaoe}.

However, the  graphs that result from this naive expansion are not minimal. Often, one can merge some pairs of milestones (for instance the ending milestone of one task and the starting milestone of another task) to produce a simpler activity-on-edge graph (such as the one for the same schedule in \autoref{fig:aoesimplified}). Despite having fewer milestones, this simpler graph can be equivalent to the original, in the sense that its critical paths (maximal sequences of tasks that belong to a single path in the graph) are the same. By being simpler, this merged graph should aid in the visualization of project schedules.
In this paper we formulate and provide a polynomial time algorithm for the problem of optimal simplification of activity-on-edge graphs.

\subsection{New Results}
\begin{figure}
	\center
    \includegraphics[scale=0.8]{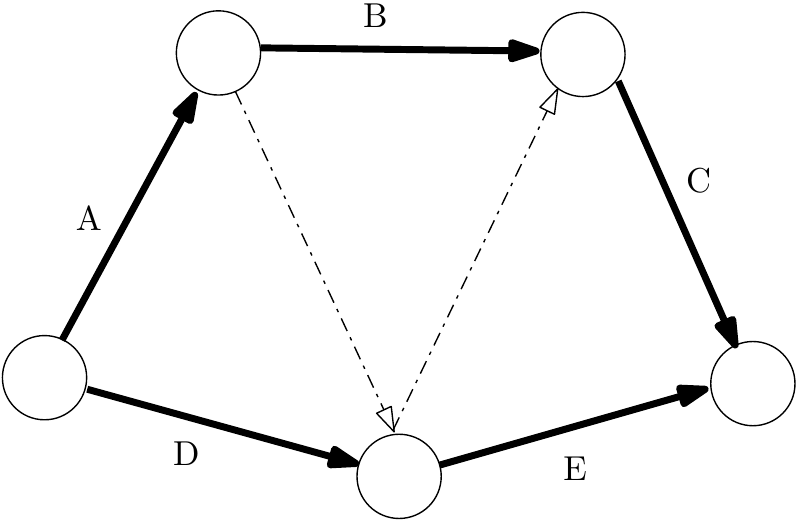}
  \caption{ A simplification of the graph from \autoref{fig:aonaoe}.}
  \label{fig:aoesimplified}
\end{figure}
We describe a polynomial-time algorithm which, given an activity-on-edge graph (i.e., a directed acyclic graph with a subset of its edges labeled as tasks), produces a directed acyclic graph
that preserves the critical paths of the graph and has the minimum possible number of
vertices among all critical-path-preserving graphs for the given input.
Our algorithm is agnostic about the weights of the tasks. In more general terms, the resulting graph has the following properties:
\begin{itemize}
\item The task edges in the given graph correspond one-to-one with the task edges in the new graph.
\item The new graph has the same dependency (reachability) relation among task edges as the original graph.
\item The new graph has the same potential critical paths as the original graph.
\item The number of vertices of the graph is minimized among all graphs with the first three properties.
\end{itemize}

Our algorithm repeatedly applies a set of local reduction rules, each of which either merges a pair of adjacent vertices or removes an unlabeled edge, in arbitrary order. When no rule can be applied, the algorithm outputs the resulting graph.

We devote the rest of this section to related work and then describe the preliminaries in \autoref{sec:prelim}. We then present the algorithm in \autoref{sec:alg} and show in \autoref{sec:correctness} that its output preserves the potential critical paths of the input, and in \autoref{sec:optimality} that it has the minimum possible number of vertices. We also show that the output is independent of the order in which the rules are applied. We discuss the running time in \autoref{sec:analysis} and conclude with \autoref{sec:conclusion}.

\subsection{Related work}
Constructing clear and aesthetically pleasing drawings of directed acyclic graphs is an old and well-established task in graph drawing, with many publications~\cite{sugiyama1981methods,battista1998graph,bm2001layered,hn2014hierarchical}.
The work in this line that is most closely relevant for our work involves
upward drawings of unweighted directed acyclic graphs~\cite{bt1992area,gt2001computational,gt1995upward,ahr2010improving}
or leveled drawings of directed acyclic graphs that have been given a level assignment~\cite{junger1999level,hkl2004characterization}
(an assignment of a $y$-coordinate to each vertex, for instance representing its height on a timeline).

Although multiple prior publications use activity-on-edge graphs~\cite{kulkarni1984compact,cg1986parallel,arh1994nearcritical,hd2011project}
and even consider graph drawing methods specialized for these graphs~\cite{xu2010automatic},
we have been unable to locate prior work on their simplification.
This problem is related to a standard computational problem, the construction of the transitive reduction of a directed acyclic graph or equivalently the covering graph of a partially ordered set~\cite{agu1972transitive}. We note in addition our prior work on augmenting partially ordered sets with additional elements (preserving the partial order on the given elements) in order to draw the augmented partial order as an upward planar graph with a minimum number of added vertices~\cite{es2013confluent}.

The PERT method may additionally involve the notion of ``float'', in which a given task may be delayed some amount of time (depending on the task) without any effect on the overall time of the project~\cite{arditi2006selecting,householder1990owns}. We do not consider constraints of this form in the present work, although the unlabeled edges of our output can in some sense be seen as serving a similar purpose.

\section{Preliminaries}
\label{sec:prelim}

We first define an activity-on-edge graph. The graph can be a multigraph to allow tasks that can be completed in parallel to share both a start and end milestone when possible.

\begin{definition}
\label{def:CPG}
Define an \emph{activity-on-edge graph (AOE)} as a directed acyclic multigraph $G = (V, E)$, where a subset of the edges of $E$, denoted $\mathcal{T}$, are labeled as \emph{task edges}. The labels, denoting tasks, are distinct, and we identify each edge in $\mathcal{T}$ with its label.
\end{definition}

\begin{definition}
\label{def:stend}
Given an AOE $G$ with tasks $\mathcal{T}$, for all $T \in \mathcal{T}$, let $\stv_G(T)$ be the start vertex of $T$, and let $\en_G(T)$ be the end vertex of $T$.
\end{definition}

When the considered graph is clear from context, we omit the subscript $G$ and write $\stv(T)$ and $\en(T)$.
It may be that $\stv(T) = \stv(T')$, or $\en(T) = \en(T')$, or $\en(T) = \stv(T')$ with $T \neq T'$.

To define potential critical paths formally, we introduce the following notation.
\begin{definition}
\label{def:haspath}
Given an AOE $G$ with tasks $\mathcal{T}$, for all $T$, $T'$ with $T \neq T'$, say that $T$ \emph{has a path to} $T'$ in $G$ if there exists a path from $\en(T)$ to $\stv(T')$, or if $\en(T) = \stv(T')$, and write $T \rsag{G} T'$.
\end{definition}

\begin{definition}
\label{def:critpath}
Given an AOE $G$ with tasks $\mathcal{T}$, define a \emph{potential critical path} as a sequence of tasks $P =(T_1, \dots, T_k)$, where for all $i = 1, \dots, k - 1$, $T_i \rsag{G} T_{i+1}$, and where $P$ is not a subsequence of any other sequence with this property.
\end{definition}

Our algorithm will apply a set of transformation rules to an input AOE of a \emph{canonical} form.
\begin{definition}
\label{def:inputCPG}
A \emph{canonical AOE} is an AOE which is naively expanded from an activity-on-node graph. 
\end{definition}

Every AOE $G$ can be transformed into a canonical AOE with the same reachability relation on its tasks. First, we start by computing the reachability relation of the tasks. The transitive closure of the resulting reachability matrix gives an activity-on-node graph (which is quadratic, in the worst case, in the size of the original AOE). Then, this activity-on-node graph can be converted to a canonical AOE as described in Section~\ref{sec:intro}. 

\begin{definition}
\label{def:equivaoe}
Two AOE graphs $G$ and $H$ are \emph{equivalent}, i.e. $G \equiv H$, if $G$ and $H$ have the same set of tasks\textemdash i.e., there is a label-preserving bijection between the task edges of $G$ and those of $H$\textemdash and, with respect to this bijection, $G$ and $H$ have the same set of potential critical paths.
\end{definition}

\begin{definition}
\label{def:optaoe}
An AOE $G$ is \emph{optimal} if $G$ minimizes the number of vertices for its equivalence class: i.e., if for every AOE $H \equiv G$, $|V(H)| \geq |V(G)|$.
\end{definition}

We now formally define our problem.
\\
\textbf{Problem 1.} \textit{Given a canonical AOE $G$, find some optimal AOE $H$ with $H \equiv G$.}

\section{Simplification Rules}
\label{sec:alg}

\begin{figure}
\center
    \includegraphics[scale=0.75]{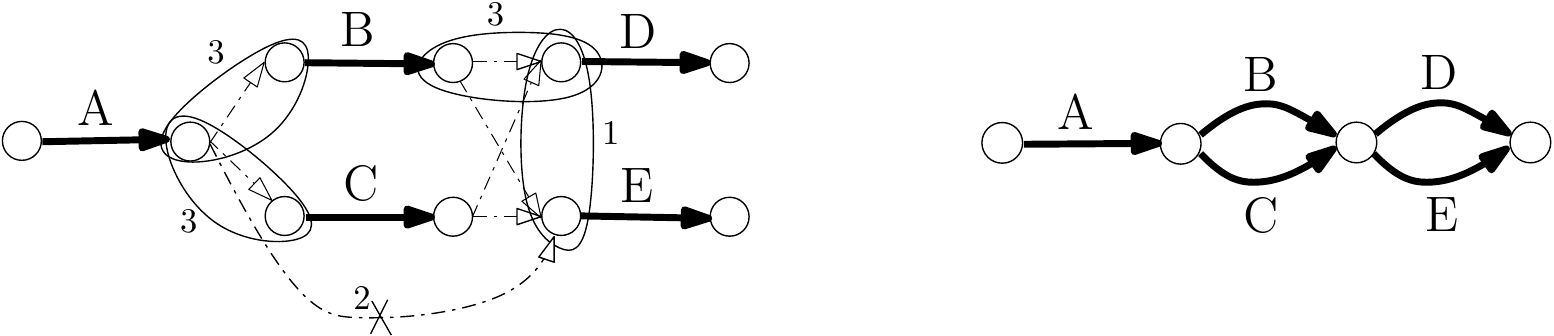}
  \caption{On the left, an AOE in which each of rules~\ref{rule:SameNeigh}-\ref{rule:AllHasPath} can be applied, and on the right, the corresponding graph output by the algorithm.}
  \label{fig:ruleexample}
\end{figure}
Our algorithm takes a canonical AOE and greedily applies a set of rules until no more rules can be applied. Given an AOE $G = (V, E)$ and given two distinct vertices $u, v \in V$, the simplification rules used by our algorithm are:

\begin{enumerate}
    \item if $u$ and $v$ have no outgoing task edges and have precisely the same outgoing neighbors, merge $u$ and $v$. Symmetrically, if $u$ and $v$ have no incoming task edges and have precisely the same incoming neighbors, merge $u$ and $v$.
    \label{rule:SameNeigh}
    \item If $u$ has an unlabeled edge to $v$, and $u$ has another path to $v$, remove the edge $(u, v)$.
    \label{rule:HasPath}
    \item If $u$ has an unlabeled edge to $v$ and the following conditions are satisfied, merge $u$ and $v$:
    	\begin{itemize}
    		\item rule \ref{rule:HasPath} is not applicable to the edge $(u,v)$.
    		\item if $u$ has an outgoing task, then $v$ has no incoming edge other than $(u,v)$.
    		\item if $v$ has an incoming task, then $u$ has no outgoing edge other than $(u,v)$.
    		\item every incoming neighbor of $v$ has a path to every outgoing neighbor of $u$.
    	\end{itemize}
        \label{rule:AllHasPath}
        
\end{enumerate}

\autoref{fig:ruleexample} depicts an AOE graph and the graph output by the algorithm after applying all possible rules. 

It will be convenient for the proofs in Section~\ref{sec:optimality} to give a name to the output of the algorithm:
\begin{definition}
\label{def:outputaoe}
An \emph{output AOE}, denoted \alg, is any AOE obtained from a canonical AOE $G$ by a sequence of applications of rules~\ref{rule:SameNeigh}, \ref{rule:HasPath}, and \ref{rule:AllHasPath}, to which none of these rules can still be applied.
\end{definition}

\section{Correctness}
\label{sec:correctness}

In this section we prove the correctness of our algorithm (its output graph is equivalent to its input graph).

We begin with preserving potential critical paths. We show that the rules never change the existence or nonexistence of a path from one task to another, and that this implies preservation of potential critical paths.

\begin{lemma}
\label{lem:cpimpliespo}
Given two AOEs $G$ and $H$ with the same set of tasks $\mathcal{T}$, $G$ and $H$ have the same reachability relation $\rsag{}$ on the tasks if and only if $G \equiv H$.
\end{lemma}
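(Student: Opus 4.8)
The plan is to prove both directions of the biconditional, and the natural strategy is to show that potential critical paths are determined entirely by the relation $\rsag{}$, so that equality of the relations and equality of the path sets are really the same information repackaged. Concretely, I would first establish that for two AOEs sharing the task set $\mathcal{T}$, a sequence $P = (T_1, \dots, T_k)$ is a potential critical path in $G$ if and only if it satisfies a purely relational condition phrased in terms of $\rsag{G}$: namely that $T_i \rsag{G} T_{i+1}$ holds for every consecutive pair, and that $P$ is maximal, meaning there is no task $T_0$ with $T_0 \rsag{G} T_1$ and no task $T_{k+1}$ with $T_k \rsag{G} T_{k+1}$, and no task that can be inserted between two consecutive entries. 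This is essentially just unwinding \autoref{def:critpath} and \autoref{def:haspath}, but stating it cleanly makes the rest immediate.

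For the forward direction ($\Leftarrow$, assuming $G \equiv H$), I expect I would want the contrapositive: if the relations differ, there is some ordered pair $(T, T')$ with $T \rsag{G} T'$ but not $T \rsag{H} T'$ (or vice versa), and I would build a potential critical path in one graph that is not present in the other, contradicting equivalence. The subtle point is that the presence of the \emph{pair} $T \rsag{} T'$ does not by itself guarantee that $T$ and $T'$ appear consecutively in some \emph{maximal} sequence; I would need to argue that any valid relational edge can be extended to a full potential critical path by greedily prepending predecessors and appending successors until maximality is reached, which terminates because the graph is a finite DAG. So the main work here is showing that every $\rsag{}$-edge is witnessed by some potential critical path containing it as a consecutive pair.

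For the reverse direction ($\Rightarrow$, assuming the relations agree), I would argue that since $\rsag{G}$ and $\rsag{H}$ are literally the same relation on $\mathcal{T}$, the relational characterization of potential critical paths from the first step produces identical path sets: a sequence is a potential critical path in $G$ exactly when it satisfies the consecutive-$\rsag{}$ and maximality conditions, and those conditions reference only the shared relation. Hence $G$ and $H$ have the same potential critical paths, so $G \equiv H$. This direction is essentially bookkeeping once the characterization lemma is in hand.

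The hard part will be the extension/maximality argument in the forward direction, because \autoref{def:critpath} defines potential critical paths through a subsequence-maximality condition rather than through a clean local "no predecessor / no successor" condition, and one must be careful that $\rsag{}$ need not be transitive as a relation witnessing \emph{consecutive} membership (a task inserted in the middle could break the consecutiveness even if the endpoints remain related). I would therefore spend most of the effort verifying that the relational characterization of maximality is equivalent to the subsequence-maximality in \autoref{def:critpath}, and that a single differing pair really does force a differing maximal path; the remaining implications should follow directly from the definitions.
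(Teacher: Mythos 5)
Your proposal is correct and takes essentially the same route as the paper, which simply observes that $T \rsag{G} T'$ holds iff $T$ appears earlier than $T'$ in some potential critical path of $G$, so the relation and the path set determine each other; your plan is a careful unpacking of exactly that equivalence. The one fact doing the work in your "forward" direction is the transitivity of $\rsag{}$ (concatenating the witnessing paths through the intermediate task edges), which lets "earlier in some maximal chain" recover "related" without needing consecutiveness.
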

\begin{proof}
Trivially, we have $T\rsag{G} T'$ (or $T\rsag{H} T'$) if and only if $T$ is earlier than $T'$ in some potential critical path of $G$ (or $H$). Therefore, preservation of critical paths is equivalent to preservation of the reachability relation.  
\end{proof}

\begin{lemma}
\label{lem:doesnotremovepath}
The output of the algorithm is equivalent to its input.
\end{lemma}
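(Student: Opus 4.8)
The plan is to reduce the claim to a statement purely about reachability. By \autoref{lem:cpimpliespo}, two AOEs on the same task set are equivalent exactly when they induce the same relation $\rsag{}$ on tasks, so it suffices to prove that a single application of any one of rules~\ref{rule:SameNeigh}, \ref{rule:HasPath}, and \ref{rule:AllHasPath} leaves $\rsag{}$ unchanged; equivalence of input and output then follows by composing the finitely many rule applications. Writing $a \to^* b$ for the vertex-level relation ``$a = b$ or there is a directed path from $a$ to $b$'', I recall that $T \rsag{} T'$ holds precisely when $\en(T) \to^* \stv(T')$, so throughout I would track how each rule affects $\to^*$ together with the start and end vertices of the tasks.

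Rule~\ref{rule:HasPath} is the easy case: deleting an unlabeled edge $(u,v)$ for which another $u$-to-$v$ path remains clearly preserves $\to^*$ between every pair of vertices, and since it changes no $\stv(T)$ or $\en(T)$, it preserves $\rsag{}$. For rule~\ref{rule:SameNeigh} I would analyze a single merge, say of vertices $u,v$ that have no outgoing task edges and identical out-neighborhoods (the incoming version being symmetric). The crucial observations are (i) because neither $u$ nor $v$ starts a task, the merge changes no $\stv(T)$ and changes $\en(T)$ only by renaming the value $u$ or $v$ to the merged vertex $w$; and (ii) $u$ and $v$ have the same forward-reachable set, so $u \to^* x \iff v \to^* x \iff w \to^* x$ for every $x \notin \{u,v\}$, and the merge neither creates nor destroys any vertex reachability beyond this renaming. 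Combining (i) and (ii), for every $T,T'$ the condition $\en(T)\to^*\stv(T')$ has the same truth value before and after the merge, so $\rsag{}$ is preserved.

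The real work is rule~\ref{rule:AllHasPath}, the merge of $u$ and $v$ across an unlabeled edge $(u,v)$. Since merging only identifies vertices, it can never destroy a reachability, so I only need to rule out \emph{new} task reachabilities $T \rsag{} T'$, which I would do by a short case analysis on where the merge could manufacture a new path, showing that each of the four side conditions neutralizes exactly one such possibility. First, the hypothesis that rule~\ref{rule:HasPath} does not apply (no $u$-to-$v$ path other than the edge) guarantees the merge introduces no directed cycle, so the result is still a DAG and $\to^*$ behaves as expected. A genuinely new vertex path from $a$ to $b$ must pass through the merged vertex $w$ by entering on an in-edge of one of $\{u,v\}$ and leaving on an out-edge of the other; entering at $u$ and leaving at $v$ was already a path in $G$ through the edge $u\to v$, so the only dangerous case is entering at an in-neighbor of $v$ and leaving at an out-neighbor of $u$, which the fourth condition (every in-neighbor of $v$ reaches every out-neighbor of $u$) already realizes as a path in $G$. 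Finally, the merge could create a new task reachability by making $\en(T)$ and $\stv(T')$ coincide at $w$; the only new coincidence not already witnessed by $u \to^* v$ in $G$ is $\en(T)=v$ together with $\stv(T')=u$, and the second and third conditions forbid $v$ from carrying an incoming task while $u$ carries an outgoing one, so this cannot occur.

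I expect this last enumeration to be the main obstacle and the heart of the argument: the bookkeeping that verifies each of the four conditions of rule~\ref{rule:AllHasPath} is exactly what is needed to block one way the merge could alter $\rsag{}$. The remaining care is mostly in handling the edge cases around $T = T'$ and the disappearance of the self-loop created by identifying the endpoints of $(u,v)$, both of which I would dispatch by noting that $(u,v)$ is unlabeled and hence never itself a task.
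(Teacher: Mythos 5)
Your proposal is correct and follows essentially the same route as the paper: reduce equivalence to preservation of the task reachability relation via \autoref{lem:cpimpliespo}, then check that each single rule application preserves $\rsag{}$, with the only nontrivial direction being that merges create no new task reachabilities. Your case analysis for rule~\ref{rule:AllHasPath} is in fact slightly more careful than the paper's, since you also explicitly rule out the ``coincidence'' mechanism where $\en(T)$ and $\stv(T')$ become equal at the merged vertex, but this is a refinement of the same argument rather than a different one.
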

\begin{proof}
We show the invariant that given tasks $T$ and $T'$, $T \rsag{} T'$ at a given iteration of the algorithm if and only if $T \rsag{} T'$ at the next iteration. From this it follows that the output of the algorithm has the same reachability relation on its tasks as the input, and then the lemma follows from Lemma~\ref{lem:cpimpliespo}.

The invariant is true because merging a pair of vertices (rules \ref{rule:SameNeigh} and \ref{rule:AllHasPath}) never disconnects a path, and no edge is ever removed (by rule \ref{rule:HasPath}) between two vertices unless another path exists between the two vertices. In particular, the end vertex of $T$ still has a path to the start vertex of $T'$ after the application of any of the rules.

For the other direction, removing an edge never introduces a new path. Furthermore, if vertices $u$ and $v$ are merged by applying rule~\ref{rule:SameNeigh}, and if some vertex $w$ has a path to some vertex $z$ through the newly merged $uv$, then the condition of rule~\ref{rule:SameNeigh} ensures that $w$ has a path, through $u$ or $v$, to $z$ before the merge. Similarly, suppose $u$ and $v$ are merged by applying rule~\ref{rule:AllHasPath}. Then if $w$ has a path to $z$ through $uv$, then (abusing notation) either $w \rsag{} u$ and $v \rsag{} z$ before the merge, so $w \rsag{} z$ (via the edge $(u, v)$), or for some incoming neighbor $x$ of $v$ and outgoing neighbor $y$ of $u$, $w \rsag{} x$ and $y \rsag{} z$. In this case, by the conditions of the rule, $w \rsag{} z$ before the merge.
\end{proof}

\begin{restatable}{lemma}{lemcycle}
\label{lem:nocycle-intermediate}
Any intermediate graph that results from applying rules of the algorithm to an input canonical AOE graph, is acyclic.
\end{restatable}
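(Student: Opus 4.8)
The plan is to argue by induction on the number of rule applications. For the base case, the input canonical AOE is acyclic: it is obtained by naively expanding an activity-on-node graph, which the paper defines to be a transitively reduced directed acyclic graph, and the expansion (splitting each task node into a start and an end milestone joined by a task edge, routing the incoming unlabeled edges to the start and the outgoing ones from the end, and adding a global source and sink) sends every edge forward in the original topological order and so introduces no directed cycle. It therefore suffices to show that a single application of any rule to an acyclic graph again yields an acyclic graph.

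The key fact I would isolate first is a general statement about vertex merging in a DAG. If $G$ is acyclic and we identify two distinct vertices $u$ and $v$ into a single vertex $w$ (deleting any self-loop that the identification creates), then the resulting graph contains a directed cycle if and only if $G$ already contained a directed path of positive length from $u$ to $v$ (other than a single contracted edge $u \to v$) or from $v$ to $u$. The ``if'' direction is clear. For ``only if'', any new cycle must pass through $w$, and such a cycle corresponds exactly to a directed path in $G$ that leaves $u$ or $v$ and returns to $u$ or $v$; the two cases where the path begins and ends at the same vertex are themselves cycles of $G$ and are excluded by the inductive hypothesis, leaving only the $u \rightsquigarrow v$ and $v \rightsquigarrow u$ cases. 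With this fact in hand, each rule is handled by checking that its preconditions forbid both offending paths.

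For rule~\ref{rule:HasPath} there is nothing to prove, since deleting an edge can never create a directed cycle. For rule~\ref{rule:AllHasPath}, the edge $(u,v)$ is present, so a path of positive length from $v$ to $u$ would close a directed cycle together with this edge, contradicting acyclicity; and the precondition that rule~\ref{rule:HasPath} is \emph{not} applicable to $(u,v)$ states precisely that $u$ has no path to $v$ other than the edge itself, which rules out the remaining offending path. By the merging fact, the merge creates no cycle.

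The one case requiring a short direct argument is rule~\ref{rule:SameNeigh}, where $u$ and $v$ are merged because they share the same out-neighborhood (or, symmetrically, the same in-neighborhood), and where no edge connecting them is guaranteed. Consider the out-neighborhood version and suppose, for contradiction, that there were a directed path of positive length from $u$ to $v$; let $x$ be its first vertex after $u$. Then $x$ is an out-neighbor of $u$, hence also of $v$, so $v \to x$ is an edge, and concatenating it with the remainder of the original path from $x$ to $v$ yields a directed cycle $v \to x \rightsquigarrow v$ in $G$, contradicting acyclicity. Swapping the roles of $u$ and $v$ rules out a positive-length path from $v$ to $u$, and the in-neighborhood case is dual. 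I expect this rule to be the main obstacle, since it is the only case in which the absence of a connecting path is not read off from a pre-existing edge but must be deduced from the shared-neighborhood condition together with acyclicity; once it is settled, the merging fact finishes the induction.
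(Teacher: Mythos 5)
Your proof is correct and follows essentially the same route as the paper's: induction on rule applications, observing that a new cycle would have to pass through the merged vertex and hence correspond to a pre-existing $u\rightsquigarrow v$ or $v\rightsquigarrow u$ path, with the shared-neighborhood condition handling rule~\ref{rule:SameNeigh} exactly as you describe. The only cosmetic difference is in rule~\ref{rule:AllHasPath}: you exclude the $u\rightsquigarrow v$ case via the precondition that rule~\ref{rule:HasPath} is inapplicable, whereas the paper invokes the condition that every incoming neighbor of $v$ reaches every outgoing neighbor of $u$; both are valid.
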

\begin{proof}
 Given~\autoref{def:CPG} and~\autoref{def:inputCPG}, the canonical AOE input $G$ is acyclic. Now we show none of the rules can create a cycle after being applied to an intermediate acyclic graph $G'$. This is obvious for rule~\ref{rule:HasPath} as it removes edges.  Suppose for a contradiction that merging vertices $u$ and $v$ creates a cycle. The cycle must involve the new vertex resulting from the merge. For rule~\ref{rule:SameNeigh}, this implies the existence of a cycle in $G'$ either from $u$ or $v$ to itself which is a contradiction. For rule~\ref{rule:AllHasPath}, it implies the existence of a cycle in $G'$ including the unlabeled edge $(u,v)$ or a cycle including an incoming neighbor of $v$ and an outgoing neighbor of $u$, which is a contradiction.  
\end{proof}
\begin{corollary}
Any graph $\mathcal{A}$ output by the algorithm is acyclic. 
\label{cor:nocycle}
\end{corollary}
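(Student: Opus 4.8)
The plan is to derive this corollary immediately from \autoref{lem:nocycle-intermediate}, with essentially no additional work. The key observation is that an output graph $\mathcal{A}$ is not a new kind of object: by \autoref{def:outputaoe}, it is obtained from a canonical AOE $G$ by a finite sequence of applications of rules~\ref{rule:SameNeigh}, \ref{rule:HasPath}, and \ref{rule:AllHasPath}, with the only distinguishing feature being that no further rule can be applied to it. First I would make this explicit: $\mathcal{A}$ is precisely the terminal graph of such a rule-application sequence starting from a canonical input.

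From here the conclusion is direct. \autoref{lem:nocycle-intermediate} asserts that \emph{any} graph resulting from applying the rules to a canonical AOE is acyclic; in particular this quantifier ranges over the final graph in any such sequence, not merely the strictly intermediate ones. Since $\mathcal{A}$ is one of these graphs, it is acyclic. I expect the only point meriting a word of care — and it is a very minor one — is confirming that the phrase ``intermediate graph'' in the lemma is understood to include the terminal graph of the sequence (equivalently, that the lemma covers the result of zero or more rule applications, which trivially includes the last step). There is no genuine obstacle here: the corollary is a one-line specialization of the preceding lemma.
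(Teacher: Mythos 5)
Your proposal is correct and matches the paper's intent exactly: the paper states this corollary without a separate proof precisely because it is the immediate specialization of \autoref{lem:nocycle-intermediate} to the terminal graph of the rule-application sequence, which is what you argue. No further comment is needed.
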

\section{Optimality}
\label{sec:optimality}

In this section we prove the optimality of our algorithm: it uses as few vertices as possible.
Let \alg be any output AOE. Let \opt be any optimal AOE such that $\alg \equiv \opt$. Our proof relies on an injective mapping from the vertices of \alg to the vertices of \opt. The existence of this mapping shows that \alg has at most as many vertices as \opt, and therefore has the optimal number of vertices.
Once we have identified the vertices of \alg with the vertices of \opt in this way, we show that, for a given input, any two graphs output by the algorithm (but not necessarily \opt) must have the same unlabeled edges. Since the task edges are determined, and since the injective mapping to \opt determines the vertices, determining the unlabeled edges implies the order-independence of our algorithm's choice of simplification rules.

Before defining the mapping between \alg and \opt, we establish some facts about the structure of \alg.

\begin{lemma}
\label{lem:taskdensity}
For every unlabeled edge $(u, v)$ in any output AOE \alg, there exist tasks $T$ and $T'$ such that $u = \en(T)$ and $v = \stv(T')$.
\end{lemma}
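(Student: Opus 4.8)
The plan is to establish, by induction on the number of rule applications, a structural invariant strong enough to pin down the endpoints of every unlabeled edge, and then to exploit that \alg is fully reduced to discharge the two boundary cases. The invariant I would maintain is: for every unlabeled edge $(u,v)$ in the current graph, $u$ either has an incoming task edge or has no incoming edges at all, and symmetrically $v$ either has an outgoing task edge or has no outgoing edges at all. This holds in the canonical input, where every unlabeled edge runs from an end-milestone (incoming task) or the global source (no incoming edges) to a start-milestone (outgoing task) or the global sink (no outgoing edges). Alongside it I would carry a companion reachability invariant: every vertex with no incoming edges has a path to all other vertices, and every vertex with no outgoing edges is reached by all others. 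The companion holds in the canonical graph and survives every rule, since no rule disconnects a reachable pair (rule~\ref{rule:HasPath} deletes an edge only when an alternative path remains, and the merges in rules~\ref{rule:SameNeigh} and~\ref{rule:AllHasPath} cannot destroy reachability); moreover a deletion never produces a new vertex without incoming edges, precisely because the deleted edge was redundant.

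For the inductive step I would verify the main invariant rule by rule. Rules~\ref{rule:SameNeigh} and~\ref{rule:HasPath} are routine: a same-neighbor merge replaces two vertices by the union of their incident edges, and one checks that each disjunction is preserved under such unions; a deletion only removes incident edges and cannot falsify either disjunction. The delicate case is rule~\ref{rule:AllHasPath}, which contracts an unlabeled edge $(u,v)$ into a vertex $w$. The danger is that $w$ acquires an outgoing unlabeled edge and retains an incoming edge while having no incoming task, breaking the first disjunction (the symmetric danger on the outgoing side is handled dually using the sink). I would split on the origin of that outgoing unlabeled edge. If it comes from $v$, then, since $w$ and hence $v$ has no incoming task, the pre-merge invariant forces $v$ to have had no incoming edges, so $w$ inherits none and the case collapses. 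If instead it comes from $u$ while $v$ keeps a second incoming neighbor $b \neq u$, then $u$ has no incoming edges, so by the companion invariant $u$ has a path to $b$; since $b$ is an in-neighbor of $v$ and the graph is acyclic, this produces an alternative path from $u$ to $v$, so rule~\ref{rule:HasPath} would already have been applicable to $(u,v)$, contradicting the precondition of rule~\ref{rule:AllHasPath}.

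To finish, I would argue in the reduced graph \alg. Take any unlabeled edge $(u,v)$ and suppose $u$ has no incoming task; by the main invariant $u$ then has no incoming edges, so by the companion invariant $u$ has a path to every vertex. If $v$ had an incoming neighbor $b \neq u$, the same alternative-path argument would make rule~\ref{rule:HasPath} applicable; otherwise $(u,v)$ is the only incoming edge of $v$, and one checks that all four preconditions of rule~\ref{rule:AllHasPath} hold for $(u,v)$, so that rule would apply. Both outcomes contradict the assumption that \alg admits no further rule, so $u$ has an incoming task and $u = \en(T)$. The statement $v = \stv(T')$ follows by the symmetric argument, interchanging incoming with outgoing, the source with the sink, and the second precondition of rule~\ref{rule:AllHasPath} with the third.

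I expect the main obstacle to be the rule~\ref{rule:AllHasPath} case of the induction: showing that contracting an unlabeled edge cannot manufacture a vertex that carries an unlabeled edge on a side where the invariant demands a task edge. This is exactly the point where the two invariants must be used together, combining reachability from an incoming-edge-free vertex with the non-applicability of rule~\ref{rule:HasPath} to force the contradiction.
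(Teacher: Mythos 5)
Your proof is correct, but it takes a genuinely different and much longer route than the paper's, and the detour is not wasted effort. The paper's proof is two lines: the stated property already holds for the canonical input $G$, and since every rule only deletes an unlabeled edge or merges two vertices (never creating an edge or a vertex), the property is preserved verbatim as an invariant. Strictly speaking, however, the full property does \emph{not} hold for the canonical input, because the naive expansion adds a global source and sink milestone: the unlabeled edge from the source to the start vertex of a minimal task has a tail that is not the end of any task. Your weakened invariant (``incoming task \emph{or} no incoming edges at all,'' and dually) together with the companion reachability invariant is precisely the machinery needed to carry those two boundary vertices through the induction, and your terminal argument\textemdash that in a fully reduced graph a source-like tail of an unlabeled edge would trigger rule~\ref{rule:HasPath} or rule~\ref{rule:AllHasPath}\textemdash correctly eliminates them at the end. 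So your argument buys robustness at the boundary at the cost of considerably more case analysis, whereas the paper's argument is the (correct, where it applies) observation that the full property is trivially closed under edge deletion and vertex merging.

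One small imprecision in your rule~\ref{rule:AllHasPath} step: when the offending outgoing unlabeled edge of the merged vertex comes from $v$, you conclude that the pre-merge invariant forces $v$ to have had no incoming edges, so that $w$ inherits none. But $v$ does have an incoming edge, namely $(u,v)$ itself, so the invariant instead forces $v$ to have an incoming \emph{task}, which $w$ then inherits; the case still collapses, just via the first disjunct rather than the second. This does not affect the validity of the overall argument.
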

\begin{proof}
By Definition~\ref{def:outputaoe}, \alg is produced by the algorithm from some canonical AOE $G$. This property holds for $G$ by \autoref{def:inputCPG}. As every rule of the algorithm either removes an unlabeled edge or merges two vertices, and never creates a new edge or vertex, the proof is complete. 
\end{proof}

\begin{corollary}
\label{cor:taskforvertex}
Every vertex in an output AOE \alg has an incident task edge.
\end{corollary}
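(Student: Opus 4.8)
The plan is to derive the corollary almost directly from Lemma~\ref{lem:taskdensity}, reducing the claim to the statement that \alg has no isolated vertices. First I would observe that Lemma~\ref{lem:taskdensity} already forces every endpoint of an unlabeled edge to be incident to a task edge: if $(u,v)$ is unlabeled then $u = \en(T)$ gives $u$ an incoming task and $v = \stv(T')$ gives $v$ an outgoing task. Consequently, a vertex $w$ that fails to have an incident task edge can have no incident unlabeled edge either (such an edge would, by the lemma, supply $w$ with an incident task), and so $w$ must be isolated. The corollary therefore holds as soon as we exclude isolated vertices.

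To exclude them, I would argue that the property ``every vertex is incident to at least one edge'' holds for the canonical input and is preserved by each rule. In a canonical AOE (\autoref{def:inputCPG}) every milestone arising from the expansion of a task is incident to that task's edge, and the two global start/end milestones are incident to unlabeled edges whenever the project contains at least one task (the degenerate taskless case makes the corollary vacuous). No rule destroys this invariant: rule~\ref{rule:HasPath} deletes an unlabeled edge $(u,v)$ only when there is a path from $u$ to $v$ other than that edge, a path which therefore begins with an outgoing edge at $u$ and ends with an incoming edge at $v$, so neither endpoint is stranded; and the merges of rules~\ref{rule:SameNeigh} and~\ref{rule:AllHasPath} only combine two vertices, with the merged vertex inheriting all of their incident edges. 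Hence no intermediate graph, and in particular \alg, contains an isolated vertex, completing the reduction.

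The main obstacle I anticipate is the bookkeeping around the global start and end milestones, which in the canonical form carry only unlabeled edges and so superficially look like counterexamples to the corollary. The resolution is that Lemma~\ref{lem:taskdensity} already guarantees these edges have been removed or reconfigured in \alg, so I never need to trace exactly how the rules transform those two milestones. The only genuinely delicate point is confirming that rule~\ref{rule:HasPath} cannot strand a vertex, which hinges on reading ``another path'' as a path distinct from the edge being deleted (possibly a parallel edge, since the graph is a multigraph), ensuring that an alternative incident edge always survives at each endpoint.
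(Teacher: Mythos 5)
Your proof is correct and follows essentially the same route as the paper, which states this as an immediate consequence of Lemma~\ref{lem:taskdensity} without further argument. You merely make explicit the one detail the paper leaves implicit, namely that no vertex of \alg can be isolated (so that every vertex is in fact an endpoint of some edge to which the lemma or the task-edge case applies).
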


We can now define a mapping from the vertices of \alg to those of \opt:

\begin{definition}
\label{def:mapping}
Given an output AOE \alg with task set $\mathcal{T}$, and given an optimal AOE \opt with $\alg \equiv \opt$, let $M: V(\alg) \rightarrow V(\opt)$ be the following mapping: for every $v~\in~V(\alg)$:\begin{itemize}
    \item Let $M(v) = \stv_{\opt}(T)$, for some $T \in \mathcal{T}$ for which $v = \stv_{\alg}(T)$, if such a task exists.
    \item Let $M(v) = \en_{\opt}(T)$, where $v = \en_{\alg}(T)$, otherwise.
\end{itemize}
\end{definition}
As shown in \autoref{cor:taskforvertex}, every vertex in \alg has an incident task edge, and by Definition~\ref{def:equivaoe}, \alg and \opt have the same set of tasks. Therefore, this mapping is well-defined (up to its arbitrary choices of which task to use for each $v$). To prove that $M$ is injective, we will use the fact that since $\alg \equiv \opt$, \alg and \opt have the same reachability relation (by \autoref{lem:cpimpliespo} and \autoref{lem:doesnotremovepath}).

The heart of the proof that $M$ is injective lies in showing that if two tasks do not share a vertex in \alg, the corresponding tasks also do not share the corresponding vertices in \opt. From this it follows that $M$ cannot map distinct vertices in \alg to the same vertex in \opt.
\begin{lemma}
\label{lem:startstartvertices}

Given an output AOE \alg, and an optimal AOE $\opt \equiv \alg$, with task set $\mathcal{T}$, let $T$ and $T'$ be two distinct tasks in $\mathcal{T}$. If $\stv_\alg(T) \neq \stv_\alg(T')$, then $\stv_\opt(T) \neq \stv_\opt(T')$. If  $\en_\alg(T) \neq \en_\alg(T')$, then $\en_\opt(T) \neq \en_\opt(T')$.
\end{lemma}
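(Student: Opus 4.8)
The plan is to prove the contrapositive of each implication and to push it down to a statement entirely about the reduced graph \alg. I will argue the start-vertex claim in detail; the end-vertex claim then follows by the symmetric argument obtained by reversing all edges, which swaps starts with ends and predecessors with successors, and leaves the rules symmetric. So suppose $\stv_\opt(T) = \stv_\opt(T')$; I will show $\stv_\alg(T) = \stv_\alg(T')$. The first observation is that sharing a start vertex pins down the predecessor set: for any task $T''$, $T'' \rsag{\opt} T$ holds iff there is a path to $\stv_\opt(T) = \stv_\opt(T')$, i.e.\ iff $T'' \rsag{\opt} T'$. Since $\alg \equiv \opt$ forces the two graphs to have the same reachability relation (\autoref{lem:cpimpliespo} together with \autoref{lem:doesnotremovepath}), the tasks $T$ and $T'$ therefore have identical predecessor sets in \alg as well. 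From here everything takes place inside \alg.

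Writing $a = \stv_\alg(T)$ and $a' = \stv_\alg(T')$, I assume for contradiction that $a \neq a'$ and exhibit a rule that still applies to \alg, contradicting \autoref{def:outputaoe}. The recurring tool is that every incoming neighbor of a start vertex is the end of some task (\autoref{lem:taskdensity}), whose predecessor relation is shared between $T$ and $T'$, combined with the fact that in \alg rule~\ref{rule:HasPath} has left no redundant unlabeled edge and \autoref{cor:nocycle} forbids cycles. First I would note that $a$ and $a'$ cannot both have an incoming task: an incoming task at $a$ is a shared predecessor, forcing a path $a \to a'$, and symmetrically a path $a' \to a$, contradicting acyclicity. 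A similar analysis of the last edge of any path from $a$ to $a'$ shows that $a$ can reach $a'$ (or vice versa) only through a \emph{direct} edge, since a longer path would yield, via the penultimate vertex and the shared-predecessor property, a path back to $a$ and hence a cycle. This leaves two situations to dispatch.

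In the case where neither $a$ nor $a'$ has an incoming task and there is no edge between them, I would show their incoming neighborhoods coincide: given an incoming neighbor $w$ of $a$, it is the end of a task $R$ with $R \rsag{} T$, hence $R \rsag{} T'$, so $w$ reaches $a'$; were this not the direct edge $(w,a')$, the penultimate vertex on the path would be the end of a task reaching $a$, producing a second path from $w$ to $a$ and triggering rule~\ref{rule:HasPath}. Hence $w$ is an incoming neighbor of $a'$, and by symmetry the two neighborhoods are equal, so rule~\ref{rule:SameNeigh} applies. The remaining and most delicate situation is when a direct edge $(a,a')$ is present, which is forced whenever $a$ has an incoming task; here I expect the \textbf{main obstacle}, namely verifying all four preconditions of rule~\ref{rule:AllHasPath} for $(a,a')$. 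Redundancy-freeness from rule~\ref{rule:HasPath} makes $(a,a')$ the unique $a$-to-$a'$ path; the shared-predecessor property plus another rule~\ref{rule:HasPath} argument shows $a$ is the only incoming neighbor of $a'$ (any other would carry a redundant edge into $a'$); and the two remaining conditions hold vacuously or trivially. Thus rule~\ref{rule:AllHasPath} applies, again a contradiction, and we conclude $a = a'$. The hardest bookkeeping will be keeping the path-length and acyclicity arguments precise, so that each ``second path'' I invoke genuinely differs from the single edge it is meant to eliminate.
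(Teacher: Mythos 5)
Your proof is correct and follows essentially the same strategy as the paper's: transfer the shared-start hypothesis into a shared reachability relation on tasks in \alg (via Lemmas~\ref{lem:cpimpliespo} and~\ref{lem:doesnotremovepath}), then use Lemma~\ref{lem:taskdensity} and acyclicity to show that one of rules~\ref{rule:SameNeigh}--\ref{rule:AllHasPath} would still be applicable, contradicting the definition of \alg. Your case split (on whether a direct edge joins the two start vertices, after showing any path between them must be a single edge) merely reorganizes the paper's split on the types of their incoming edges, and reaches the same contradictions with the same ingredients.
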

\begin{proof}

Suppose for a contradiction that $\stv_\alg(T)$ $\neq$ $\stv_\alg(T')$, but $\stv_\opt(T) = \stv_\opt(T')$ (the other case is symmetrical). Let $u = \stv_\alg(T)$ and $v = \stv_\alg(T')$. Consider the following (exhaustive) cases for $u$ and $v$:
\begin{enumerate}
    \item $u$ and $v$ have no incoming edges
    \label{case:NoEdge}
    \item $u$ or $v$ has an incoming unlabeled edge, but neither $u$ nor $v$ has an incoming task edge
    \label{case:JustUnl}
    \item $u$ or $v$ has an incoming task edge $A$
    \label{case:haveTask}
\end{enumerate} 

In case~\ref{case:NoEdge}, applying rule~\ref{rule:SameNeigh} results in merging $u$ and $v$. However, since \alg is the output of the algorithm, no rule can be applied to \alg. This is a contradiction.

In case~\ref{case:JustUnl}, $u$ and $v$ cannot have the same incoming neighbors or else rule~\ref{rule:SameNeigh} would apply. We may assume without loss of generality that there exist a vertex $w$ and an unlabeled edge $(w,u)$, such that the edge $(w, v)$ does not exist. By \autoref{lem:taskdensity}, there exists a task $A$ where $w = \en_\alg(A)$. Since $A \rsag{\alg} T$ and $\alg \equiv \opt$ (by Lemma~\ref{lem:doesnotremovepath}), then by Lemma~\ref{lem:cpimpliespo}, $A \rsag{\opt} T$, so $A \rsag{\opt} T'$, since $\stv_\opt(T) = \stv_\opt(T')$. Again by Lemma~\ref{lem:cpimpliespo}, $A \rsag{\alg} T'$, so there is a path $P$ from $w$ to $v$. If $|P| = 1$, then this contradicts that $(w, v)$ does not exist.
Suppose $|P| > 1$. Then we show there exist some vertex $w' \neq w$ and an unlabeled edge $(w',v)$. The following cases are exhaustive:
 \begin{enumerate}[(a)]
 	\item $P$ contains a path from $u$ to $v$. As such a path to $v$ exists and $v$ has no incoming task edge, there exist a vertex $w'$ and an unlabeled edge $(w',v)$ ($w' \neq u$), not belonging to $P$ unless rule~\ref{rule:AllHasPath} can be applied to vertex $v$ and its incoming neighbor in path $P$ .  
 	\label{case:w'_outofP}
 	\item $P$ does not contain a path from $u$ to $v$. As $|P| > 1$, an unlabeled edge $(w',v)$ belonging to path $P$ exists. 
 	\label{case:w'_belongP}
 \end{enumerate}
 Given the existence of $(w',v)$, by \autoref{lem:taskdensity}, there exists a task $B$ where $w' = \en_\alg(B)$. $B \rsag{\alg} T'$, so by reasoning similar to the above, $B \rsag{\alg} T$. Then, one can apply rule~\ref{rule:HasPath} and either remove edge $(w',v)$ in case~\ref{case:w'_outofP} or $(w,u)$ in case~\ref{case:w'_belongP} (Figure~\ref{fig:startstartcase2}); this contradicts the definition of \alg. 

In case~\ref{case:haveTask}, we can assume without loss of generality that $u$ has an incoming task $A$; consequently, $u = \en_\alg(A)$. Then, by Lemma~\ref{lem:cpimpliespo}, we have  $A \rsag{\opt} T'$ and thus $A \rsag{\alg} T'$ via a path $P$. Consider the following cases for $P$:
    \begin{enumerate}[(a)]
        \item $P$ contains a task edge $B$ 
        \item $P$ is a sequence of unlabeled edges
    \end{enumerate}
    
In case a, by Lemma~\ref{lem:cpimpliespo}  $B \rsag{\opt} T'$, and thus $B \rsag{\opt} T$, and therefore $B \rsag{\alg} T$. This creates a cycle between $u$ and  $\en_\alg(B)$, contradicting \autoref{cor:nocycle}.

In case b, illustrated in ~\autoref{fig:startstartcase3}, since rule~\ref{rule:AllHasPath} cannot be applied (if it could, this would contradict the definition of \alg), there exist a vertex $x$ not on the path from $u$ to $v$, and an edge $(x,v)$ (a task edge or an unlabeled edge). Therefore, there exists a task $B$ where either $v = \en_\alg(B)$ or by \autoref{lem:taskdensity}, $x = \en_\alg(B)$. Considering \opt and applying \autoref{lem:cpimpliespo}, $B \rsag{\opt} T$ so $B \rsag{\alg} T$. This path either creates a cycle in \alg or allows for removing edge $(x,v)$ by rule~\ref{rule:HasPath}, which is a contradiction.

Thus if $\stv_\alg(T) \neq \stv_\alg(T')$, then $\stv_\opt(T) \neq \stv_\opt(T')$. 
\begin{figure}[t]
\center
  \includegraphics[scale=0.8]{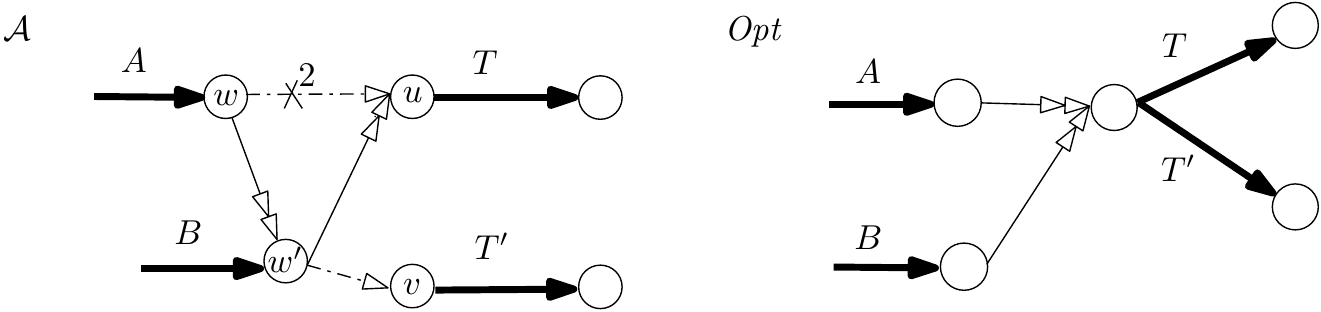}
  \caption{\autoref{lem:startstartvertices}, case~\ref{case:JustUnl}, subcase~\ref{case:w'_belongP} (double arrows indicate paths).}
  \label{fig:startstartcase2}

  \includegraphics[scale=0.8]{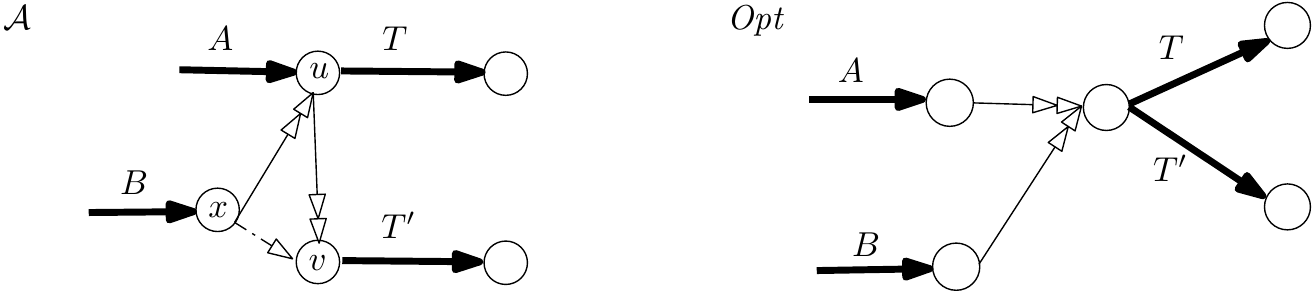}
  \caption{\autoref{lem:startstartvertices}, case~\ref{case:haveTask}, subcase b.}
  \label{fig:startstartcase3}
\end{figure}

\end{proof}

\begin{restatable}{lemma}{endstartvertices}

\label{lem:endstartvertices}
Given an output AOE \alg, and an optimal AOE $\opt \equiv \alg$, with task set $\mathcal{T}$, let $T$ and $T'$ be two distinct tasks in $\mathcal{T}$. If $\en_\alg(T) \neq \stv_\alg(T')$, then $\en_\opt(T) \neq \stv_\opt(T')$.
\end{restatable}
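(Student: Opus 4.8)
The plan is to argue by contradiction, in the same spirit as \autoref{lem:startstartvertices}. Suppose $\en_\alg(T) \neq \stv_\alg(T')$ but $\en_\opt(T) = \stv_\opt(T') =: p$, and write $u = \en_\alg(T)$, $v = \stv_\alg(T')$, so that $u \neq v$. Since $\en_\opt(T) = \stv_\opt(T')$ we trivially have $T \rsag{\opt} T'$, so by \autoref{lem:cpimpliespo} (using $\alg \equiv \opt$) we get $T \rsag{\alg} T'$; as $u \neq v$, this yields a directed path $P$ from $u$ to $v$ in \alg of length at least one. This forced path is exactly the feature absent from \autoref{lem:startstartvertices}, and the whole argument revolves around it.

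The engine of the proof is a ``sandwiched task'' observation. If some task $B$ satisfied both $T \rsag{\alg} B$ and $B \rsag{\alg} T'$, then by \autoref{lem:cpimpliespo} the same would hold in \opt, and concatenating a path from $p$ to $\stv_\opt(B)$, the task edge $B$, and a path from $\en_\opt(B)$ back to $\stv_\opt(T') = p$ would produce a cycle through $B$ in \opt, contradicting that \opt is acyclic. Hence no such task exists. First I would use this to show that $P$ contains no task edge: a task edge on $P$ would be reached from $T$ along the prefix and would reach $T'$ along the suffix, a sandwiched task. So $P$ is entirely unlabeled. The same engine settles the base case in which $P$ is the single edge $(u,v)$: this edge is then unlabeled, and I would verify that all conditions of \autoref{rule:AllHasPath} hold for $(u,v)$, forcing a merge and contradicting \autoref{def:outputaoe}. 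Condition~1 holds since otherwise \autoref{rule:HasPath} applies; for the remaining conditions I attach, via \autoref{lem:taskdensity}, a task $A$ with $A \rsag{\alg} T'$ to each incoming neighbor $x$ of $v$, and a task $C$ with $T \rsag{\alg} C$ to each outgoing neighbor $y$ of $u$. Routing $A \rsag{\opt} T'$ and $T \rsag{\opt} C$ through the common vertex $p$ gives $A \rsag{\opt} C$, hence $A \rsag{\alg} C$, hence a path from $x$ to $y$; the task-edge side conditions are checked the same way, where a violation yields either a shortcut removable by \autoref{rule:HasPath} or a cycle (against \autoref{cor:nocycle}), and every degenerate coincidence of the auxiliary tasks produces a cycle in \alg or \opt.

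For the general case, where $P$ has length at least two, I would run the same neighbor-attaching analysis at the last edge $(b,v)$ of $P$, which is unlabeled by the step above, again using \autoref{lem:taskdensity} and the $p$-engine to establish reachability between incoming neighbors of $v$ and outgoing neighbors of $b$, deleting shortcuts with \autoref{rule:HasPath} and excluding coincidences through cycles. The overall goal is invariant: for any output \alg satisfying the contradiction hypothesis, exhibit some applicable simplification rule, contradicting \autoref{def:outputaoe}.

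I expect this length-at-least-two case to be the main obstacle. When $u$ and $v$ are joined only by a longer unlabeled path, the clean merge at $(u,v)$ is unavailable, and the local merges along $P$ can genuinely be blocked, since the task-edge side conditions of \autoref{rule:AllHasPath} may fail when $v$ carries an extra incoming task or $b$ carries an extra outgoing edge. Resolving this requires chasing the auxiliary neighbor tasks and repeatedly routing reachability through $p$, together with careful bookkeeping of the degenerate coincidences of auxiliary tasks, each of which must be shown to force either a cycle (contradicting acyclicity of \alg via \autoref{cor:nocycle}, or of \opt) or a removable edge via \autoref{rule:HasPath}.
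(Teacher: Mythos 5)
Your overall strategy is the paper's: derive the forced path $P$ from $u=\en_\alg(T)$ to $v=\stv_\alg(T')$, use the ``routing through $p$'' engine to show any task on $P$ would close a cycle in \opt (this is exactly the paper's case~\ref{case:TinP}), and then analyze an all-unlabeled $P$ locally, attaching auxiliary tasks to neighbors via \autoref{lem:taskdensity} and concluding with a cycle, rule~\ref{rule:HasPath}, or rule~\ref{rule:AllHasPath}. Your base case $|P|=1$ and your sandwiched-task observation are sound and match the paper.

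The gap is that you stop at the case you yourself identify as the main obstacle: when $P$ has length at least two and the local merge is blocked, you describe the bookkeeping that would be needed (``chasing the auxiliary neighbor tasks \dots careful bookkeeping of the degenerate coincidences'') but do not carry it out, so the argument is a plan rather than a proof precisely where the content lies. The paper closes this by fixing an arbitrary unlabeled edge $(u',v')$ of $P$ and splitting into four subcases according to whether the extra edge leaving $u'$ and the extra edge entering $v'$ are tasks or unlabeled (cases~\ref{case:TT}--\ref{case:UU} in Appendix~\ref{app:endstartvertices}); if no such extra edges exist, rule~\ref{rule:AllHasPath} merges $u'$ and $v'$ outright. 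In every subcase one obtains tasks $S$, $S'$ with $T \rsag{\alg} S$ (via the prefix of $P$ into $u'$) and $S' \rsag{\alg} T'$ (via the suffix of $P$ out of $v'$); routing through $p$ in \opt gives $S' \rsag{\opt} S$, hence $S' \rsag{\alg} S$, and this backward path either closes a cycle across the edge $(u',v')$ (contradicting \autoref{cor:nocycle}), or makes an incident unlabeled edge redundant so rule~\ref{rule:HasPath} applies, or satisfies the neighbor condition so rule~\ref{rule:AllHasPath} applies. Note the resolution is not to force the merge at $(u',v')$ despite the blocking edges, as your phrasing suggests, but to show that the blocking configuration itself always yields an applicable rule or a cycle; your proposal needs this explicit case analysis to be complete.
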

\begin{proof}
The proof, which is in Appendix~\ref{app:endstartvertices}, uses essentially the same approach as the proof of Lemma~\ref{lem:startstartvertices}: supposing that the two vertices are the same, then using the fact that \alg and \opt have the same reachability relation on their tasks, and the definition of \alg as having no more rules to apply, to derive a contradiction. 

\end{proof}


There is one remaining technicality: we have defined an optimal AOE as being acyclic; the question arises whether one could reduce the number of vertices by allowing (unlabeled) cycles. However, this is not the case; it is easy to see that any unlabeled cycle can be merged into one vertex, reducing the number of vertices, without changing the reachability relation on the tasks. 

We are ready to prove our main results.
\begin{theorem}
\label{thm:endisopt}
Given a canonical AOE $G$, the algorithm produces an optimal AOE $\opt \equiv G$.
\end{theorem}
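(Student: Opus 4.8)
The plan is to establish the two halves of the claim separately: that the output \alg is equivalent to $G$, and that it is optimal. Equivalence is already in hand from \autoref{lem:doesnotremovepath}, which gives $\alg \equiv G$ directly, so the entire burden falls on optimality, i.e., on showing that $|V(\alg)|$ is as small as possible over the equivalence class of $G$ in the sense of \autoref{def:optaoe}.

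For optimality, I would fix an arbitrary optimal AOE \opt with $\opt \equiv \alg$ (such a graph exists because the equivalence class is nonempty and vertex counts are bounded below), and argue that the mapping $M \colon V(\alg) \to V(\opt)$ of \autoref{def:mapping} is injective. Once injectivity is shown, it follows that $|V(\alg)| \le |V(\opt)|$; combined with the optimality of \opt within the shared equivalence class, this forces $|V(\alg)| = |V(\opt)|$, so \alg itself attains the minimum and is therefore optimal. By \autoref{cor:taskforvertex} every vertex of \alg carries an incident task, so $M$ is well defined, and the real content is injectivity.

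To prove $M$ injective, I would suppose $M(u) = M(v)$ for distinct $u, v \in V(\alg)$ and split into cases according to whether each of $u$ and $v$ is the start vertex of some task or only the end vertex of a task. If both are start vertices, say $u = \stv_\alg(T)$ and $v = \stv_\alg(T')$, then $M(u) = M(v)$ means $\stv_\opt(T) = \stv_\opt(T')$, contradicting the contrapositive of the first part of \autoref{lem:startstartvertices}. If neither is a start vertex, so $u = \en_\alg(T)$ and $v = \en_\alg(T')$, then $\en_\opt(T) = \en_\opt(T')$ contradicts the second part of the same lemma. In the mixed case, say $u = \stv_\alg(T)$ and $v = \en_\alg(T')$ with $u \neq v$, equality $M(u) = M(v)$ gives $\stv_\opt(T) = \en_\opt(T')$; applying \autoref{lem:endstartvertices} to the pair consisting of $T'$ and $T$ (noting $\en_\alg(T') = v \neq u = \stv_\alg(T)$) yields $\en_\opt(T') \neq \stv_\opt(T)$, the desired contradiction. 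The degenerate subcase $T = T'$ is immediate, since a task edge in a DAG has distinct endpoints, so its start and its end map to distinct vertices of \opt.

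The main obstacle is not located in this theorem itself but is already discharged by \autoref{lem:startstartvertices} and \autoref{lem:endstartvertices}: the delicate reasoning that two tasks kept apart in \alg cannot be collapsed together in \opt. Given those lemmas, the remaining work here is the bookkeeping of the case analysis together with the observation that an injection into a minimum-size equivalent graph is itself minimum-size. One point requiring a word of care is that \opt ranges only over acyclic graphs; the remark preceding the theorem shows this loses no generality, since any unlabeled cycle could be contracted to a single vertex without altering the reachability relation on tasks.
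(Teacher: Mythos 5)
Your proposal is correct and follows essentially the same route as the paper: it reduces the theorem to injectivity of the mapping $M$ from \autoref{def:mapping}, handles the start/start, end/end, and mixed cases via \autoref{lem:startstartvertices} and \autoref{lem:endstartvertices}, and obtains equivalence from \autoref{lem:doesnotremovepath}. Your treatment is in fact slightly more careful than the paper's (e.g., the degenerate $T = T'$ subcase and the existence of \opt), but the substance is identical.
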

\begin{proof}
Let \alg be the output AOE produced by the algorithm on $G$. Given any optimal AOE \opt and \alg, the mapping $M$ in \autoref{def:mapping} is injective: suppose for a contradiction that $u$ and $v$ are distinct vertices in \alg, and $w = M(u) = M(v)$. Then by the definition of $M$, either $u$, $v$, and $w$ have the same incoming task, or $u$, $v$, and $w$ have the same outgoing task, or there exist tasks $T$ and $T'$ such that (without loss of generality) $u = \en_\alg(T), v = \stv_\alg(T')$, and $\en_\opt(T) = w = \stv_\opt(T')$. By Lemmas~\ref{lem:startstartvertices} and \ref{lem:endstartvertices}, all three of these cases imply that $u = v$.

Therefore, $|V(Opt)| = |V(\alg)|$. Furthermore, $\alg \equiv G$, by Lemma~\ref{lem:doesnotremovepath}. The theorem follows. 
\end{proof}

\begin{theorem}
\label{thm:endisunique}
Given an input, the algorithm produces the same output regardless of the order in which the rules are applied.
\end{theorem}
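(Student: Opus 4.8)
The plan is to prove order-independence by combining the vertex identification established in \autoref{thm:endisopt} with a proof that the unlabeled edges are also determined. The key observation is that once we fix an optimal AOE \opt and use the injective mapping $M$, any output AOE \alg can be viewed (via $M$) as a subgraph of \opt sharing the same vertex set and the same task edges. So to show two output graphs $\alg_1$ and $\alg_2$ produced from the same input $G$ are identical, it suffices to show they have identical vertex sets and identical unlabeled edge sets. The vertices are pinned down because \autoref{thm:endisopt} shows every output has exactly $|V(\opt)|$ vertices and that $M$ identifies each output vertex with a unique \opt vertex determined by the incident task edges; since the task edges are never removed or created by any rule, two outputs identify the same collection of vertices with the same \opt vertices, giving a canonical labeling of output vertices by \opt vertices. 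It remains to prove that the unlabeled edges are forced.

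First I would set up the common framework: fix \opt and identify $V(\alg_1)$ and $V(\alg_2)$ with subsets of $V(\opt)$ via their respective injective mappings, which by \autoref{thm:endisopt} are in fact bijections onto $V(\opt)$. Under this identification both $\alg_1$ and $\alg_2$ have vertex set $V(\opt)$, the same task edges (placed between the same vertices, since \autoref{lem:startstartvertices} and \autoref{lem:endstartvertices} guarantee the incidence structure of tasks is preserved), and the same reachability relation on tasks (by \autoref{lem:cpimpliespo} and \autoref{lem:doesnotremovepath}, since both are equivalent to $G$). The goal reduces to: two output AOEs on the same vertex set, with the same task edges and the same task-reachability relation, have the same unlabeled edges.

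The main step is to characterize the unlabeled edges of an output AOE intrinsically, independent of the reduction sequence. By \autoref{lem:taskdensity}, every unlabeled edge of an output runs from some $\en(T)$ to some $\stv(T')$. I would argue that an unlabeled edge $(u,v)$ survives in an output precisely when it is \emph{forced} by the task-reachability structure: namely, when $u = \en(T)$, $v = \stv(T')$ with $T \rsag{} T'$ required, there is no task-edge already realizing this adjacency, rule~\ref{rule:HasPath} does not apply (there is no alternative path from $u$ to $v$, so the edge is not redundant), and rule~\ref{rule:AllHasPath} does not apply (so $u$ and $v$ cannot be merged). Crucially, each of these three conditions is phrased purely in terms of the vertex set, the task edges, and the reachability relation on tasks\textemdash all of which are identical for $\alg_1$ and $\alg_2$. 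Therefore an unlabeled edge is present in $\alg_1$ if and only if the identical conditions hold, if and only if it is present in $\alg_2$.

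The hard part will be verifying that the presence or absence of each unlabeled edge really is determined solely by the shared invariants and does not secretly depend on the order of reductions\textemdash in particular, ruling out the possibility that in one reduction order an edge is removed by rule~\ref{rule:HasPath} via some alternative path, while in another order that alternative path is itself destroyed by a merge before the edge can be removed, leaving the edge behind. To handle this I would appeal to the confluence already implicit in \autoref{lem:doesnotremovepath}: since every rule preserves the task-reachability relation, the final relation is the same for every order, and an alternative path witnessing redundancy of $(u,v)$ at the task level (i.e.\ a path certifying $u$ reaches $v$ without using the edge $(u,v)$) persists in some form in the output. I would argue that whenever such a task-level witness exists, the edge $(u,v)$ cannot survive in \emph{any} terminal output, because its survival alongside the witness would leave rule~\ref{rule:HasPath} applicable, contradicting \autoref{def:outputaoe}. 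Conversely, if no such witness exists, the edge is never removable. This makes edge-presence a function of the terminal reachability data alone, completing the argument that $\alg_1$ and $\alg_2$ coincide.
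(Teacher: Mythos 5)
Your overall strategy is the paper's: reduce order-independence to showing that the unlabeled edges are determined, using the vertex identification from Theorem~\ref{thm:endisopt} together with Lemmas~\ref{lem:startstartvertices} and~\ref{lem:endstartvertices}. But there is a gap at exactly the point you flag as ``the hard part,'' and your proposed resolution does not close it. The claim that the survival conditions for an unlabeled edge $(u,v)$ are ``phrased purely in terms of the vertex set, the task edges, and the reachability relation on tasks'' is false for the rule~\ref{rule:HasPath} condition: whether an alternative $u$--$v$ path exists depends on which \emph{other unlabeled edges} are present, which is precisely what you are trying to determine\textemdash a circularity. Your fix, that a ``task-level witness \dots persists in some form in the output,'' only covers the easy case in which the alternative path contains a third task $A$ (then $T \rsag{} A \rsag{} T'$ is genuinely task-level data and transfers immediately). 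The case that needs an actual argument is when the alternative path in $\alg_2$ consists entirely of unlabeled edges: its existence hinges on some intermediate vertex being simultaneously $\en(A)$ and $\stv(B)$ for tasks $A$ and $B$, which is incidence data, not reachability data.

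The paper closes this case by noting that $\alg_2$ is itself optimal (by Theorem~\ref{thm:endisopt}), so Lemma~\ref{lem:endstartvertices} may be applied with $\alg_1$ in the role of \alg and $\alg_2$ in the role of \opt: the coincidence $\en_{\alg_2}(A) = \stv_{\alg_2}(B)$ at the intermediate vertex forces $\en_{\alg_1}(A) = \stv_{\alg_1}(B)$, and reachability preservation then yields a $u$--$v$ path through that vertex in $\alg_1$, so rule~\ref{rule:HasPath} would still apply to $(u,v)$ in $\alg_1$, a contradiction. Your setup (identifying both outputs with $V(\opt)$) contains all the needed ingredients, but this decisive application of Lemma~\ref{lem:endstartvertices} \emph{between the two outputs} is the missing step; without it the ``persists in some form'' assertion is unsupported. (A minor point: routing the comparison through a third graph \opt is an unnecessary indirection\textemdash the paper simply uses one output as the optimal AOE for the other.)
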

\begin{proof}
As stated earlier, all task edges of an input canonical AOE $G$ are present in any output of the algorithm and the mapping determines the vertices. Therefore, it suffices to show that any two graphs output by the algorithm have the same set of unlabeled edges. Suppose for a contradiction that $\alg_1$ and $\alg_2$ are two distinct outputs of the algorithm, resulting from applying different sequences of rules. By \autoref{thm:endisopt}, the algorithm always produces an optimal AOE. Therefore, $|V_{\alg_1}| = |V_{\alg_2}| = |V_\opt|$. Since $\alg_1 \neq \alg_2$, there is an unlabeled edge $(u,v)$ in $\alg_1$ (without loss of generality) that is not in $\alg_2$. By \autoref{lem:taskdensity}, there exist task edges $T$ and $T'$ such that $u = \en_{\alg_1}(T)$ and $ v = \stv_{\alg_1}(T')$. We have  $T \rsag{\alg_1} T'$. Since by \autoref{lem:cpimpliespo} and \autoref{lem:doesnotremovepath}, $\alg_1$ and $\alg_2$ both preserve the reachability relation of the tasks of $G$, we have $T \rsag{\alg_2} T'$. Consider the cases for path $P$ from $T$ to $T'$ in $\alg_2$:

\begin{enumerate}
    \item There exists a task $A$ in $P$ other than $T$ and $T'$.
    \label{uniq:OnlyT}
    \item Path $P$ is a sequence of unlabeled edges.
    \label{uniq:OnlyU}
\end{enumerate}
 
In case~\ref{uniq:OnlyT}, we have $T \rsag{\alg_2} A \rsag{\alg_2} T'$ and therefore, $T \rsag{\alg_1} A \rsag{\alg_1} T'$. Then by rule~\ref{rule:HasPath}, one can remove the edge $(u,v)$, which contradicts the definition of $\alg_1$.

In case~\ref{uniq:OnlyU}, the length of $P$ is at least two, and $P$ contains a vertex $w$. By \autoref{lem:taskdensity}, there exist tasks $A$ and $B$ where $w = \en_{\alg_2}(A) = \stv_{\alg_2}(B)$. Now, since $\alg_1 \equiv \alg_2$, both graphs are optimal, and both graphs are outputs of the algorithm, Lemma~\ref{lem:endstartvertices} implies that $\en_{\alg_1}(A) = \stv_{\alg_1}(B)$. Call this vertex $x$. Then there exists a path from $u$ to $v$, through $x$, by Lemma~\ref{lem:cpimpliespo}, and one can remove the edge $(u, v)$ by rule~\ref{rule:HasPath}. This contradicts the definition of $\alg_1$.
\end{proof}

\section{Analysis}
\label{sec:analysis}
Let $n$ be the number of vertices in a canonical AOE (which is linear in the number of tasks), and $m$ the number of unlabeled edges. There are at most $O(n + m)$ iterations in the algorithm, because each iteration either merges two vertices or removes an edge, by applying one of the three rules. This requires finding an edge to remove ($O(m)$ potential edges) or two vertices to merge ($O(n^2)$ potential pairs), then performing the merge or the removal. Intuitively, our algorithm runs in polynomial time as it takes polynomial time to find and apply a rule. 

We provide a faster implementation of our algorithm than the naive approach. The algorithm transforms a canonical AOE graph $G$ into an optimal AOE graph by applying rules 1, 2 or 3. For simplicity, we label the vertices $1,\dots, n$. At each iteration, compute a reachability matrix $M$ for the current graph. $M[u][v]$ indicates whether there exist zero, one, or more than one paths from $u$ to $v$. In order to compute $M$, for all $u$ and $v$ initialize $M[u][v] = 1$ if the edge $(u,v)$ exists. Then sort the vertices in topological order (such an ordering exists according to~\autoref{lem:nocycle-intermediate}). For each vertex $v$ in this order, and for each vertex $u$, set  $M[u][v]$ to $\min(2, \sum_{w\in W}(M[u][w]))$, where $W$ is the set of all vertices $w$ such that either $w = v$ or there exists an edge $(w,v)$. This procedure takes $O(nm)$ time. Algorithm 1 provides a summary.

\begin{algorithm}[t]
 \KwData{Canonical AOE $G$}
 \KwResult{Optimal AOE \opt}
 \While{true}{
  Initialize and compute the reachability matrix $M$\;
  Remove, by rule 2, all unlabeled edges $(u,v)$ where $M[u][v] = 2$\;
 \uIf{rule 1 applies}{
     apply rule 1\;
  }
  \uElseIf{rule 3 applies}{
     apply rule 3\;
  }
  \Else{
    return the graph\;
  }
 }
 \caption{the proposed transformation algorithm.}
\end{algorithm}

Given the reachability matrix, an unlabeled edge $(u,v)$ is removed by rule 2 in $O(1)$ time, if $M[u][v] \geq 2$. Therefore, checking rule 2 for all edges takes $O(m)$ time.

Without loss of generality, for rule 1, we only consider merges of pairs of vertices with the same outgoing neighbors. This requires, for each vertex $u$ with no outgoing task edge, a sorted list of outgoing neighbors (S$[u]$). To obtain such lists for all vertices, list unlabeled edges as pairs of vertices and sort all the pairs with two bucket sorts: first over the first elements of the pairs, then over the second elements.  Breaking the sorted list into chunks of pairs with the same first element (say $u$), gives the outgoing neighbors of $u$, in the second elements of the pairs, in a numerically sorted order. This takes $O(m)$ time.  Then find pairs of vertices to merge, if any exist: first, bucket sort vertices based on their out-degree. Vertices in different buckets cannot be merged by rule 1. For each bucket $b$ containing vertices with degree $d$ ($0 \leq d < n $), call MergeDetection$(b,d)$:

\begin{algorithm}[h]

    \SetKwFunction{FMain}{MergeDetection}
    \SetKwProg{Fn}{Function}{:}{}
    \Fn{\FMain{bucket $ a,i$}}{
        \eIf{$ i=0$}
            {return bucket $a$}
            {bucket sort vertices $v$ of $a$ based on S$[v][i]$

        \ForEach{newly created bucket $a'$ }
        {
       		 \FMain{$ a',i - 1$}
        }
        }
 }

\end{algorithm}

The vertices in each resulting bucket have the same outgoing neighbors and can be merged by rule 1. As each vertex with degree $d$ appears in one bucket in each of $d+1$ iterations, this sort takes $O(\sum_v(deg(v))) = O(m)$ time. Upon merging vertices $u$ and $v$, name the new vertex min$(u,v)$. 

To check rule 3, for each vertex $v$, compute $I(v)$: the intersection of the reachable sets of the incoming neighbors of $v$. This takes $O(mn)$ time. 
Consider only those unlabeled edges $(u,v)$ that meet the preconditions of rule 3 concerning the existence of outgoing and incoming tasks of $u$ and $v$ respectively. Test whether the last point in rule 3 applies to edge $(u,v)$ by testing in $O(n)$ time whether all outgoing neighbors of $u$ are in $I(v)$. 

Computing the reachability matrix takes $O(mn)$ time, and using this matrix to check for rule 2 takes $O(m)$ time per iteration. Checking for rule 1 or 3 takes $O(mn)$ time per iteration. Further, the outer loop in Algorithm 1 runs at most $n$ times as it either merges two vertices or returns the output. This gives a total complexity of $O(mn^2)$ for our algorithm.

\section{Conclusion}
\label{sec:conclusion}
Our algorithm reduces the visual complexity of an activity-on-edge graph, making it easier to understand bottlenecks in a project. The algorithm repeatedly applies simple rules and therefore can be implemented easily. However, one can measure the complexity of a graph in other ways. One question for future work is whether one can minimize the number of edges in an AOE graph in polynomial time. Another question is
whether one can, in polynomial time, convert an AOE graph $G$ into a graph that (i) has the same critical paths as $G$, and (ii) has a plane drawing with fewer edge crossings than all other graphs satisfying (i).

\bibliography{ref}

\newpage
\appendix
\section{Appendix}

\subsection{Proof of \autoref{lem:endstartvertices}}

\label{app:endstartvertices}
\endstartvertices*
\begin{proof}

Suppose for a contradiction that $u = \en_\alg(T) \neq v = \stv_\alg(T')$, but $\en_\opt(T) = \stv_\opt(T')$. Since $T \rsag{\opt} T'$, then by Lemma~\ref{lem:cpimpliespo}, $T \rsag{\alg} T'$, so $u$ has a path $P$ to $v$. Consider the following possible cases for path $P$:

\begin{enumerate}
    \item There exists a task $A$ in $P$.
    \label{case:TinP}
    \item $P$ only consists of unlabeled edges. Consider the cases for any unlabeled edge $(u',v')$ in $P$:
    \label{case:noTinP}
        \begin{enumerate}
        \item There exist incident tasks $S$ and $S'$, pointing away from and toward $u'$ and $v'$, respectively.  
        \label{case:TT}
        \item There exists an incident unlabeled edge $(u',w')$ pointing away from $u'$ and an incident task $S'$ pointing toward $v'$. 
        \label{case:UT}
        \item There exists an incident task $S$ pointing away from $u'$ and  an  incident unlabeled edge $(w',v')$ pointing toward $v'$.
        \label{case:TU}
        \item There exists an  incident unlabeled edge $(u',w')$ pointing away from $u'$ and an incident unlabeled edge $(x',v')$ pointing toward $v'$. Vertices $u'$ and $v'$ have no outgoing or incoming task edges, respectively.
        \label{case:UU}
        \end{enumerate}
\end{enumerate}

These cases are exhaustive as path $P$ either has a task or it is a sequence of unlabeled edges. Further, for case~\ref{case:noTinP}, suppose for an unlabeled edge $(u',v')$, none of the subcases of \ref{case:TT}, \ref{case:UT}, \ref{case:TU} and \ref{case:UU} holds. Then, by rule~\ref{rule:AllHasPath}, one can merge vertices $u'$ and $v'$; this contradicts the definition of \alg. 

\begin{figure}[t]
\center
  \includegraphics[scale=0.8]{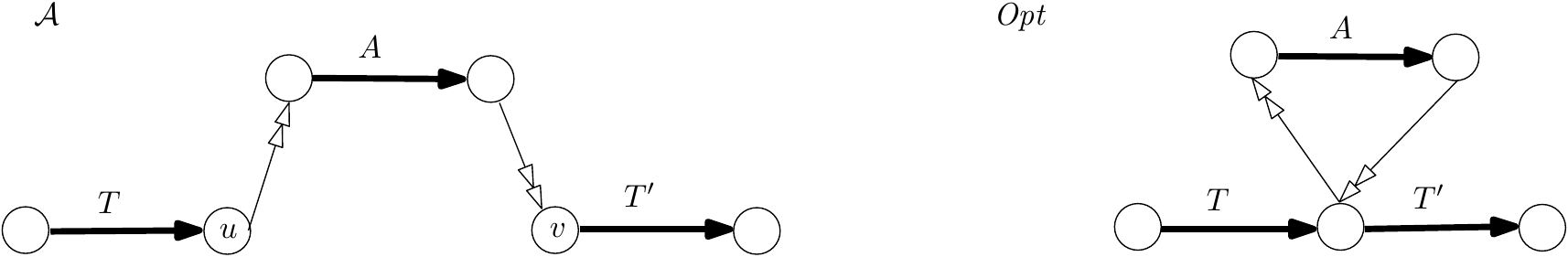}
  \caption{\autoref{lem:endstartvertices}, case~\ref{case:TinP}.}
  \label{fig:endstartcase1}

  \includegraphics[scale=0.8]{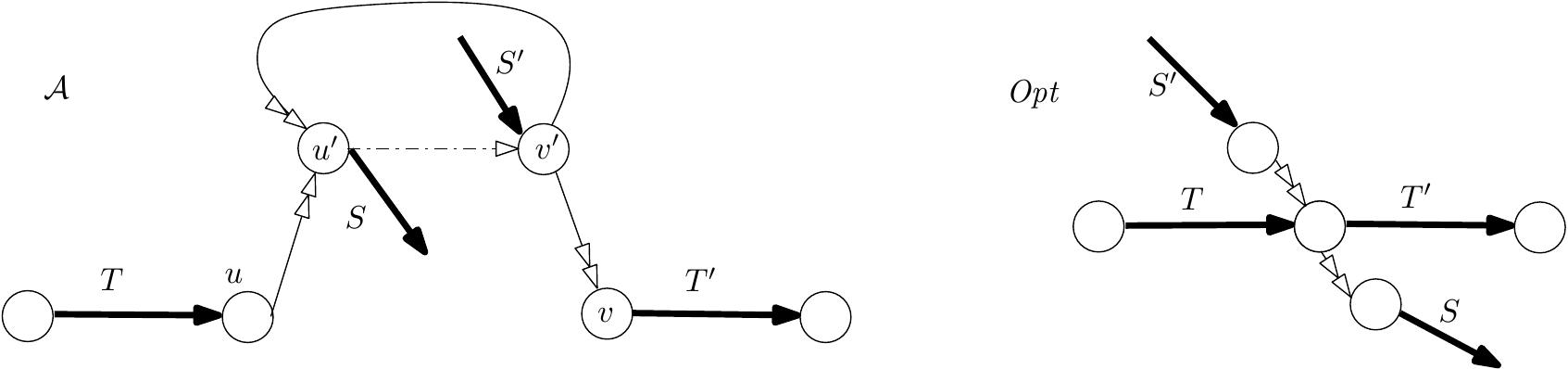}
  \caption{\autoref{lem:endstartvertices}, case~\ref{case:TT}.}
  \label{fig:endstartcase2a}
\end{figure}

\begin{figure}
\center
  \includegraphics[scale=0.8]{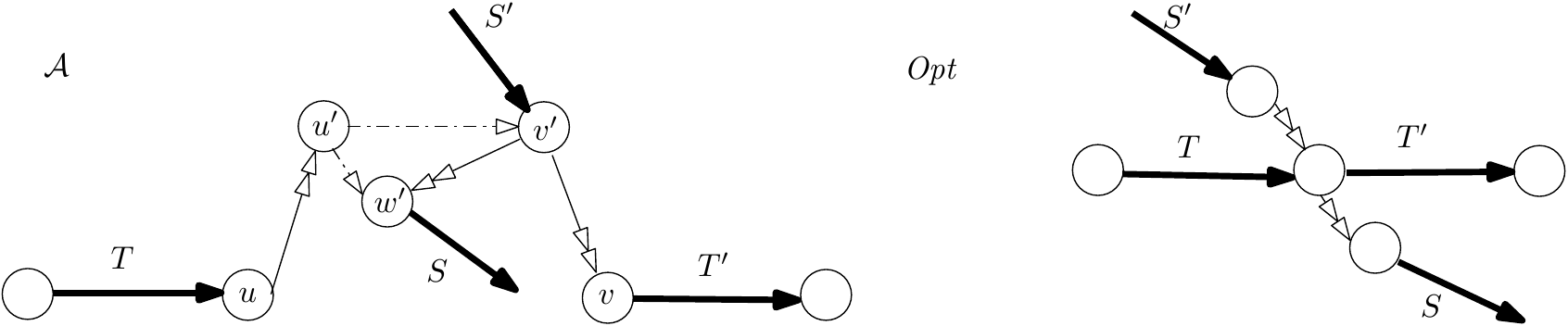}
  \caption{\autoref{lem:endstartvertices}, case~\ref{case:UT}.}
  \label{fig:endstartcase2b}

  \includegraphics[scale=0.8]{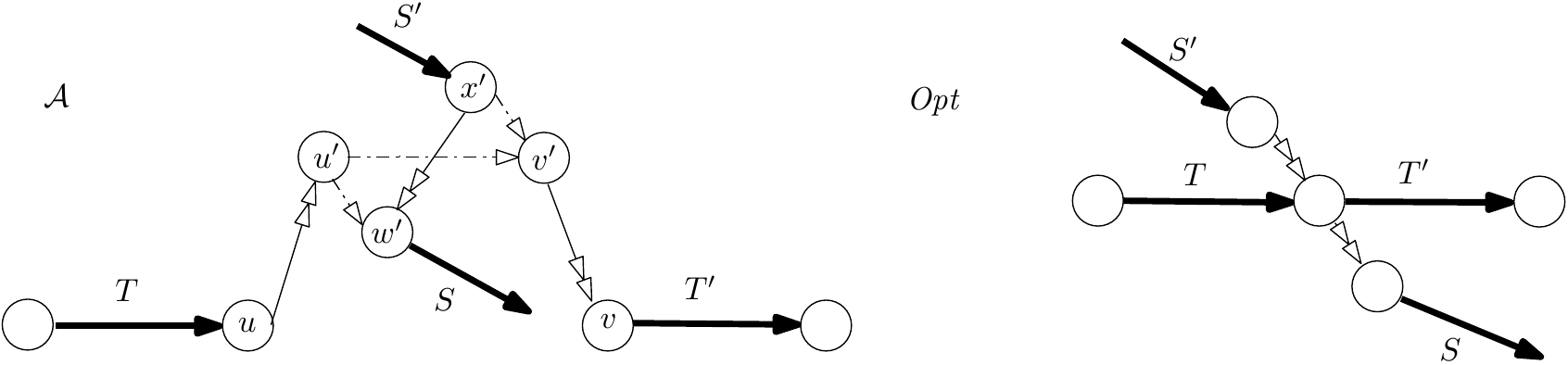}
  \caption{\autoref{lem:endstartvertices}, case~\ref{case:UU}.}
  \label{fig:endstartcase2d}
\end{figure}

In case~\ref{case:TinP}, shown in \autoref{fig:endstartcase1}, we have $T \rsag{\alg} A \rsag{\alg} T'$ so by \autoref{lem:cpimpliespo}, $T \rsag{\opt} A \rsag{\opt} T'$. Therefore, there is a path in \opt (through $A$) from $\en_\opt(T)$ to $\stv_\opt(T')$. Since $\en_\opt(T) = \stv_\opt(T')$, this path creates a cycle in \opt. However, \opt is an AOE graph and is therefore acyclic by \autoref{def:CPG}.

In case~\ref{case:TT}, shown in \autoref{fig:endstartcase2a}, $S' \rsag{\alg} T'$, so by \autoref{lem:cpimpliespo}, $S' \rsag{\opt} T'$, i.e. there is a path from $S'$ to $\stv_\opt(T')$. Similarly, $T \rsag{\alg} T'$, so there is a path from $\en_\opt(T)$ to $S$. Since $\stv_\opt(T') = \en_\opt(T)$, this implies $S' \rsag{\opt} S$, so by \autoref{lem:cpimpliespo}, $S' \rsag{\alg} S$. Therefore, there is a path in \alg from $\en_\alg(S')$ to $\stv_\alg(S)$. This path creates a cycle in \alg and contradicts \autoref{cor:nocycle}.

In case~\ref{case:UT}, shown in \autoref{fig:endstartcase2b}, by \autoref{lem:taskdensity}, there exists a task $S$ where $w' = \stv_\alg(S)$. Since we have $T \rsag{\alg} S $ and $S' \rsag{\alg} T'$ and $\en_\opt(T) = \stv_\opt(T')$, by \autoref{lem:cpimpliespo}, we have  $S' \rsag{\opt} S $. Therefore, there is a path in \alg from $v' = \en_\alg(S')$ to $w' = \stv_\alg(S)$. This path either creates a cycle between $u'$ and $v'$, contradicting \autoref{cor:nocycle} or by rule~\ref{rule:HasPath}, one can remove edge $(u',w')$, which is a contradiction by the definition of \alg.

Case~\ref{case:TU} is almost identical to case~\ref{case:UT}, and again leads to the existence of a path from $S'$ to $S$ (similarly defined), resulting in either a cycle or an application of rule~\ref{rule:HasPath} .

In case~\ref{case:UU}, shown in \autoref{fig:endstartcase2d}, by \autoref{lem:taskdensity}, there exist task edges $S$ and $S'$ where $w' = \stv_\alg(S)$ and $x' = \en_\alg(S')$, and unlabeled edges $(x', v')$ and $(u', w')$. We have $S' \rsag{\opt} S$, then by \autoref{lem:cpimpliespo}, $S' \rsag{\alg} S$. Therefore, there is a path in \alg from $\en_\alg(S')$ to $\stv_\alg(S)$. This path either creates a cycle between $u'$ and $v'$, contradicting \autoref{cor:nocycle} or by rule~\ref{rule:AllHasPath}, one can merge $u'$ and $v'$ in \alg, which is a contradiction by the definition of \alg.

Thus if $\en_\alg(T) \neq \stv_\alg(T')$, then $\en_\opt(T) \neq \stv_\opt(T')$. 

\end{proof}

\end{document}